\newtheorem{theorem}{Theorem}
\newtheorem{definition}{Definition}
\newtheorem{protocol}{Protocol}
\newtheorem{proof}{Proof}
\def\tsc#1{\csdef{#1}{\textsc{\lowercase{#1}}\xspace}}
\begin{document}
\let\WriteBookmarks\relax
\def\floatpagepagefraction{1}
\def\textpagefraction{.001}

\shorttitle{HE-DKSAP: Privacy-Preserving Stealth Address Protocol via Additively Homomorphic Encryption}    

\shortauthors{Yuping Yan et al.}  

\title [mode = title]{HE-DKSAP: Privacy-Preserving Stealth Address Protocol via Additively Homomorphic Encryption}  



\author[1,3]{Yuping Yan}[orcid=0000-0001-9966-8286]
\fnmark[1]



%

\ead{<yupingyan@inf.elte.hu>}


\credit{Conceptualization, Methodology, Validation, Writing - Original Draft}

\affiliation[1]{organization={Trustworthy and General AI Lab, School of Engineering, Westlake University},
            city={Hangzhou},
            postcode={310024}, 
            country={PR China}}
\affiliation[2]{organization={Mind Network},
            country={Singapore}}
\affiliation[3]{organization={Faculty of Informatics, Department of Computeralgebra},
            city={Budapest},
            postcode={1117}, 
            country={Hungary}}
\affiliation[4]{organization={Ethereum foundation fellow},
            city={Zug},
            country={Switzerland}
         }                         
\affiliation[5]{organization={Department of Computer Science, University of Surrey},
            city={Guildford, Surry},
            postcode={GU2 7XH}, 
            country={UK}}
         
\author[2]{George Shao}
\fnmark[1]
\credit{Conceptualization, Methodology, Validation - Original Draft}

\ead{george@mindnetwork.xyz}
\author[2]{Dennis Song}
\fnmark[1]
\credit{Conceptualization, Methodology, Validation - Original Draft}
\ead{dennis@mindnetwork.xyz}
\author[4]{Mason Song}
\fnmark[1]
\ead{mason@mindnetwork.xyz}


\credit{Conceptualization, Methodology, Validation - Original Draft}

\author[1,5]{Yaochu Jin}[orcid=0000-0003-1100-0631]

\credit{Supervision, Writing - review \& editing}

\cormark[1]

\ead{jinyaochu@westlake.edu.cn}

\cortext[1]{Corresponding author}



\begin{abstract}
Blockchain transactions have gained widespread adoption across various industries, largely attributable to their unparalleled transparency and robust security features. Nevertheless, this technique introduces various privacy concerns, including pseudonymity, Sybil attacks, and potential susceptibilities to quantum computing, to name a few. In response to these challenges, innovative privacy-enhancing solutions like zero-knowledge proofs, homomorphic encryption, and stealth addresses (SA) have been developed. Among the various schemes, SA stands out as it prevents the association of a blockchain transaction's output with the recipient's public address, thereby ensuring transactional anonymity. However, the basic SA schemes have exhibited vulnerabilities to key leakage and quantum computing attacks. To address these shortcomings, we present a pioneering solution — Homomorphic Encryption-based Dual-Key Stealth Address Protocol (HE-DKSAP), which can be further extended to Fully HE-DKSAP (FHE-DKSAP). By leveraging the power of homomorphic encryption, HE-DKSAP introduces a novel approach to safeguarding transaction privacy and preventing potential quantum computing attacks. This paper delves into the core principles of HE-DKSAP, highlighting its capacity to enhance privacy, scalability, and security in programmable blockchains. Through a comprehensive exploration of its design architecture, security analysis, and practical implementations, this work establishes a privacy-preserving, practical, and efficient stealth address protocol via additively homomorphic encryption.
\end{abstract}


\begin{highlights}
    \item We propose the integration of a homomorphic encryption scheme with a dual key stealth address protocol, aiming to elevate privacy and security levels in blockchain systems. This novel approach is named HE-DKSAP. Furthermore, we are extending this concept to incorporate a fully homomorphic encryption scheme, which we refer to as FHE-DKSAP. 
    
    \item By leveraging the power of homomorphic encryption, HE-DKSAP introduces a novel approach to safeguarding transaction privacy and preventing potential quantum computing attacks. This paper delves into the core principles of HE-DKSAP, highlighting its capacity to enhance privacy, scalability, and security in programmable blockchains.
    \item Our security analysis confirms that HE-DKSAP upholds essential standards such as data confidentiality, transaction unlikability, resistance to quantum computing attacks, and robustness against cipher-policy attacks. Furthermore, our experimental validation demonstrates that HE-DKSAP and FHE-DKSAP both excel in computation efficiency and data storage management.
\end{highlights}

\begin{keywords}
 Stealth addresses protocol \sep Fully homomorphic encryption \sep Blockchain \sep Paillier
\end{keywords}

\maketitle


\section{Introduction}

Due to transparent transactions, decentralization, and robust security, blockchain has gained significant popularity and adoption in various fields. This transparency ensures that every transaction is openly recorded on a public ledger, accessible to all participants, fostering an environment of trust. On the security front, blockchain employs advanced cryptographic techniques, making it exceedingly difficult for malicious actors to alter transaction data. As a result, businesses and individuals increasingly turn to blockchain for a more transparent, secure, and efficient way to conduct transactions. 

As it is known to all, in traditional blockchain transactions, each participant typically has a digital wallet. A \textit{wallet} is a software or hardware application that stores the user's cryptocurrency holdings and provides the necessary tools for managing and initiating transactions. Wallet addresses are used to send and receive cryptocurrencies, and these are alphanumeric strings that serve as unique identifiers for users on the blockchain. These addresses are derived from the user's public key using cryptographic techniques. Behind each wallet address, there are corresponding public and private keys. A public key is used to generate the wallet address and is publicly known. By contrast, a private key is kept secret and is used to sign transactions, proving ownership of the funds associated with the wallet address. However, contrary to the common perception that blockchain transactions provide complete anonymity, they are, by default, pseudonymous. This pseudonymity implies that blockchain transactions openly reveal the sender and receiver addresses, as well as the transaction amount, all of which are recorded on the public ledger. 

Ensuring privacy within the blockchain ecosystem presents a substantial challenge. Various privacy-preserving techniques have emerged to address the challenge of pseudonymity in traditional blockchain transactions, such as Zero-Knowledge Proofs \cite{blum2019non}, encryption schemes, and Stealth Addresses (SA), to name a few. Among them, introducing the concept of SA as a centric solution for privacy-preserving transactions is noteworthy. Based on the Diffie-Hellman key exchange protocol \cite{diffie2022new}, SA can protect users' privacy by making it extremely difficult for outsiders to link a transaction to a specific individual or entity. This anonymity is essential as blockchain ledgers are public, and anyone can inspect them. They achieve this by generating unique, one-time addresses for each transaction, making it challenging for external parties to track or analyze financial activity on the blockchain. 

Nowadays, one of the most widely embraced SA schemes is the Dual-Key Stealth Address Protocols (DKSAP) \cite{courtois2017stealth}, which introduces a noteworthy enhancement by incorporating multiple distinct spending keys into the protocol. This strategic addition serves to fortify the protocol's resilience against potential attacks like the “bad random attack" or situations where keys become compromised. It separates the root keys into viewing and spending keys to reduce the risk of the root key. While this advancement offers enhanced security, it does come with the risk of viewing key leakage and vulnerability to quantum computing attacks. We conclude that the current SA schemes face three primary challenges: key leakage attacks, scalability and usability concerns, and vulnerability to quantum computing attacks.


\begin{itemize}
    \item \textbf{Temporary key leakage \cite{feng2020pdksap}:} One significant challenge in SA Protocols is the vulnerability to key leakage attacks. These attacks occur when every time a transaction is made, the temporary public key of the recipient is attached. This makes stealth transactions easily identifiable. Specifically, the presence of the public key indicates that a particular transaction is a stealth transaction. 
    \item \textbf{Scalability and Usability:} As the adoption of blockchain networks like Ethereum continues to grow, scalability and usability become crucial concerns for any protocol, including SA protocols. Generating unique stealth addresses and managing multiple spending keys can create usability challenges for users, especially when users make frequent transactions. Striking a balance between robust privacy measures and user-friendly experiences is essential to encourage widespread adoption and make the protocol accessible to non-technical users.
    \item \textbf{Quantum Computing Threats:} The advent of quantum computing \cite{steane1998quantum} presents a potential threat to the security of existing cryptographic systems. Quantum computers have the potential to solve complex mathematical problems that underpin many encryption methods efficiently. This could render the privacy-enhancing features of SA protocols vulnerable to attacks. Researchers are exploring cryptographic techniques resistant to quantum computing threats to ensure the longevity of these privacy solutions.
\end{itemize}

To overcome these challenges, we propose the HE-DKSAP (Homomorphic Encryption-based Dual-Key Stealth Address Protocol) to provide a more resilient and secure framework while maintaining strong privacy guarantees. Fully homomorphic encryption, based on lattice, allows users to conduct computation over ciphertext without revealing the raw information. HE-DKSAP can be extended to Fully HE-DKSAP (FHE-DKSAP) for quantum computing resistance. The main contributions of this work include the following: 

\begin{itemize}
    \item HE-DKSAP replaces elliptic curve \cite{blake1999elliptic} with homomorphic encryption to improve security level by hiding the raw information. Fully homomorphic encryption constructs the lattice cryptographic and is born to equip fully HE-DKSAP to prevent quantum computing attacks. Therefore, SA in FHE-DKSAP is quantum-safe. 

    \item HE-DKSAP allows for the reuse of the recipient's public key, which not only reduces the complexities of key management but also eliminates the necessity for an ever-growing number of keys. It reduces the complexity and difficulty of SA adoption without generating a large of stealth addresses.

    \item Contrary to the dual-key framework utilized in DKSAP, our novel HE-DKSAP design addresses the issue of temporary key exposure by replacing the viewing key with the FHE key pairs, leveraging the benefits of HE schemes. 

\end{itemize}

The rest of the paper is organized as follows. Section II presents the related work of this work. Section III provides the preliminaries essential for understanding the proposal. The main proposed Homomorphic Encryption-based dual key stealth Address protocol (HE-DKSAP) approach is detailed in Section IV, followed by the implementation in Section V and evaluation in Section VI with security analysis and experiment validation. Finally, Section VII concludes the paper and proposes future work. 

\section{Related work}
This section reviews the body of work related to our proposed approach, focusing on the integration of stealth addresses and the fully homomorphic encryption scheme.

\subsection{Stealth addresses}
The development of the Stealth Addresses (SA) technology began with its initial invention by a user named ‘bytecoin’ in the Bitcoin forum on April 17, 2011. This technique introduced the concept of untraceable transactions capable of carrying secure messages, paving the way for enhanced privacy and security in blockchain systems. In 2013, van Saberhagen took the concept further in the CryptoNote white paper \cite{van2013cryptonote}, providing more insights and advancements in the SA technology. His contribution expanded the understanding of how SA can be integrated into cryptographic protocols. 

In 2017, Courtois and Mercer introduced the Robust Multi-Key SA \cite{courtois2017stealth}, which first proposed the dual key SA protocol (DKSAP) and enhanced the robustness and security of the SA technique. Fan et al. presented a faster dual-key SA protocol \cite{fan2018faster} based on the DKSAP and designed explicitly for blockchain-based Internet of Things (IoT) systems. Their protocol introduced an increasing counter, enabling quicker parsing and improving efficiency. Fan et al. tackled the issue of key length in Stealth Addresses by utilizing bilinear maps \cite{fan2019new}, thereby significantly enhancing the protocol's security and practicality. Meanwhile, Liu et al. introduced a lattice-based linkable ring signature supporting SA \cite{liu2019lattice}. This innovation aimed at countering adversarially chosen-key attacks, further reinforcing the security aspect. A ring signature and SA-based privacy-preserving blockchain-based framework for health information exchange was also proposed in \cite{lee2021mexchange}. Most recently, a fully open and reusable SA protocol, BaseSAP \cite{wahrstatter2023basesap}, was reported to provide a reliable way to implement SA on the application layer of programmable blockchains like Ethereum, utilizing the Secp256k1 elliptic curve and integrating other cryptographic techniques like elliptic curve pairings and lattice-based cryptography. 

\subsection{Fully homomorphic encryption}
Homomorphic encryption (HE) has been called “The Swiss Army knife of cryptography'', which allows one to perform complex mathematical data operations directly on encrypted data without compromising the encryption. HE is a powerful tool for safeguarding data privacy while simultaneously performing computational tasks. Based on the (Ring) Learning with Error (LWE) assumption \cite{lyubashevsky2010ideal}, HE on lattice-based cryptography demonstrates effectiveness in thwarting post-quantum attacks. The basic operations of a typical HE scheme encompass addition, multiplication, or both. These capabilities form the basis for the categorization of HE schemes, which is listed below:

  \begin{itemize}

        \item Partially Homomorphic Encryption (PHE). The initial stage of homomorphic encryption is called partial homomorphic encryption. The definition is that the ciphertext has only one homomorphic characteristic. This stage includes either the mode of additive homomorphism or multiplicative homomorphism, such as Paillier's cryptosystem \cite{Paillier99}, which can only support plaintext addition.

        \item Leveled/ Somewhat Homomorphic Encryption (LHE/ SWHE) \cite{brakerski2014leveled}. This allows any combination of addition and multiplication operations to be performed on the ciphertext. However, there is a limit to the number of such operations that can be performed, which is why it is called leveled or somewhat homomorphic encryption.

        \item Fully Homomorphic Encryption (FHE) \cite{Gentry09}. This permits a combination of addition and multiplication operations on the ciphertext. Taking advantage of Bootstrapping, a fully homomorphic encrypted system is achieved without any computational limitations.

   \end{itemize}

\section{Preliminaries}
In this section, we will present the preliminaries related to the HE-DKSAP, including the Diffile-Helleman key exchange protocol, DKSAP, Paillier scheme of homomorphic encryption, and fully homomorphic encryption scheme.

\subsection{Diffile-Hellman key exchange protocol}
    In Diffie-Hellman, they use the fact that the multiplicative group $Z^*_p$ for a prime $p$ is \emph{cyclic}. This means that there is an element $g \in Z^*_p$ called \emph{generator}, such that $Z^*_p = \{ 1 ,g ,g^2,g^3,\ldots, g^{p-2} \}$. The Diffie-Hellman (DH) key exchange protocol requires a group $(G,\cdot)$ and a generator of the group $g\in G$. The details of the DH protocol can be found as follows: 

\begin{protocol}[Diffile-Hellman key exchange protocol]
This protocol defines how a secrete key can be exchanged without sharing sensitive information.
\label{he}
\begin{itemize}
    \item[a.] There are global common elements: random prime number $p$ and a generator $g$.
    \item[b.] Alice chooses a secret $a$, and calculates public key $Y_A = g^a \pmod p$ and sends it to Bob.
    \item[c.] Bob chooses a secret $b$, and calculates public key $Y_B = g^b \pmod p$ and send it to Alice.
    \item[d.] Alice and Bob both compute $sk=g^{ab} \pmod{p}$ by computing $Y_B^a$ and $Y_A^a$, respectively.
    \item[e.] $sk$ is the private key for encryption for both parties.
\end{itemize}
\end{protocol}

\subsection{Dual-Key Stealth Address Protocol (DKSAP)}

DKSAP builds on the Diffile-Hellman key exchange protocol in the elliptic curve. $G$ is the generator in the elliptic curve. When sender A wishes to transmit a transaction to receiver B in stealth mode, DKSAP operates as illustrated in Fig. \ref{dksap}, and the process is summarized as follows: 
\begin{definition}[Stealth meta-address]
    A “stealth meta-address” is a set of one or two public keys that can be used to compute a stealth address for a given recipient.
\end{definition}
\begin{definition}[Spending key]
    A “spending key” is a private key that can be used to spend funds sent to a stealth address. 
\end{definition}
\begin{definition}[Viewing key]
    A “viewing key” is a private key that can be used to determine if funds sent to a stealth address belong to the recipient who controls the corresponding spending key. 
\end{definition}
\begin{protocol}[DKSAP]
The details of the DKSAP are the follows: 
\begin{enumerate}
    \item The receiver B has a pair of private/public keys $(v_b, V_b)$ and $(s_b, S_b)$, where $v_b$ and $s_b$ are called B’s “scan private key" and “spend private key", respectively, whereas $V_b = v_b*G$ and $S_b = s_b*G$ are the corresponding public keys. Note that none of $V_b$ and $S_b$ ever appear in the blockchain, and only the sender A and the receiver B know those keys. 
    \item The sender A generates an ephemeral key pair $(r_a, R_a)$ with $R_a = r_a*G$ and $0 < r_a < n$, and sends $R_a$ to the receiver B. 
    \item Both the sender A and the receiver B can perform the elliptic curve Diffile-Helleman protocol to compute a shared secret: $c_{AB} = H(r_{a}*v_b*G) = H(r_{a}*V_b) = H(v_{a}*R_a)$, where $H()$ is a cryptographic hash function.
    \item The sender A can now generate the destination address of the receiver B to which A should send the payment: $T_a = c_{AB}*G + S_b$. Note that the one-time destination address $T_a$ is publicly visible and appears on the blockchain. 
    \item Depending on whether the wallet is encrypted, the receiver B can compute the same destination address in two different ways: $T_a = c_{AB}*G + S_b = (c_{AB} + s_b)*G$. The corresponding ephemeral private key is $t_a = c_{AB} + s_b$, which can only be computed by the receiver B, thereby enabling B to spend the payment received from A later.
\end{enumerate}

\end{protocol}

\begin{figure*}[htb]
\centering
  \includegraphics[width=0.9\linewidth]{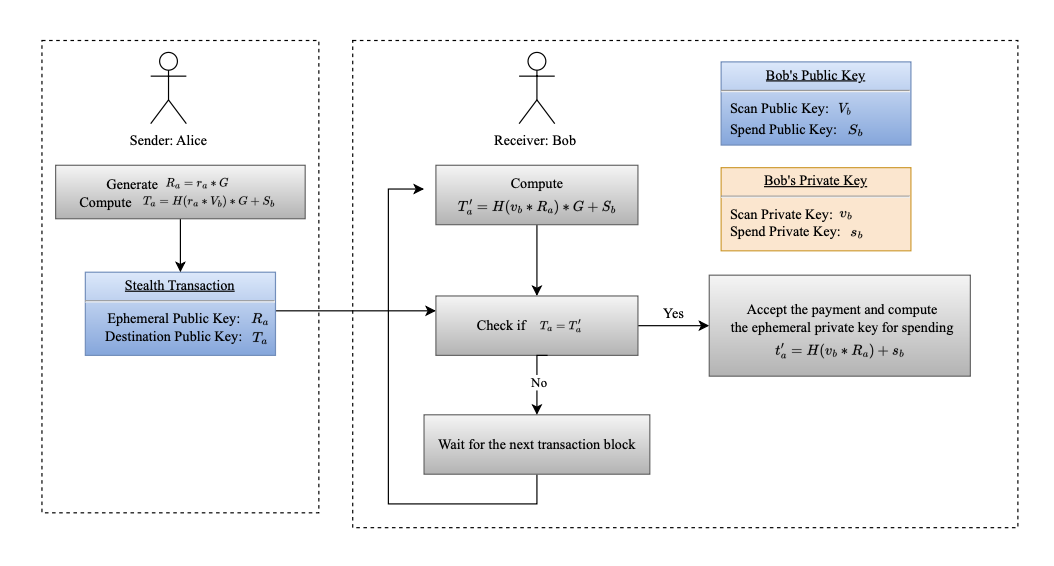}\\
  \caption{Dual-key Stealth Address Protocol (DKSAP)}
  \label{dksap}
\end{figure*} 

\subsection{Homomorphic encryption - Paillier}

Homomorphic encryption is a powerful tool for safeguarding data privacy while simultaneously performing computational tasks. The definition of HE can be found as follows:
\begin{definition}[Homomorphic encryption]
    Given two sets A and B, and a function $f: A \rightarrow B$, $f$ is said to be an additive homomorphism if, for any $x$ and $y$ in $A$, we have: $f(x \diamond y) = f(x) \diamond f(y)$.
    If the operation $\diamond$ represents an addition, we define it as an additive homomorphism. On the other hand, if $\diamond$ represents multiplication, we define it as a multiplicative homomorphism.
\end{definition}

The initial stage of homomorphic encryption is called partial homomorphic encryption. The definition is that the ciphertext has only one homomorphic characteristic. This stage includes either the mode of additive homomorphism or multiplicative homomorphism, such as Paillier's cryptosystem \cite{paillier1999public}, which can only support plaintext addition. The Paillier scheme stands out as perhaps the most well-known and the most efficient in its category of partially homomorphic systems. It uniquely supports the addition of ciphertexts, allowing for operations on encrypted data without needing to decrypt it.  

The Paillier scheme consists of key generation, encryption, and decryption. $p$ and $q$ are two large prime numbers, $n$ is the product of these two primes, $\lambda$ is Carmichael's function and $L$ is Lagrange function. The algorithm can be found as follows:

\begin{itemize}
    \item \textbf{Key generation:} 
    \begin{enumerate}
        \item Choose two large prime numbers $p$ and $q$ randomly and independently of each other such that $gcd(pq,(p-1)(q-1)) = 1$. This ensures $p$ and $q$ are co-prime.
        \item Compute the modulus $n$ - the product of two primes $n=pq$ and $\lambda = lcm(p-1,q-1)$, where $lcm$ denotes the least common multiple.
        \item Select random integer $g$ where $g \in \mathbb{Z}_n^{ 2}$.
        \item Ensure $n$ divides the order of $\mathrm{g}$ by checking the existence of the following modular multiplicative inverse: $\mu=L\left(g ^\lambda \operatorname{modn} n^2\right)^{-1} \bmod n$,  where $L$ is Lagrange function and $L(\mu)=\mu-1 / n$ for $\mu=1 \bmod n$.
        \item The public (encryption) key is $(n, g)$.
        \item The private (decryption) key is $(\lambda, \mu)$.
    \end{enumerate}

 \item \textbf{Encryption:}  
\begin{enumerate}
    \item Given a message $m$ where $0 \leq m<n$, and the public key $(n, g)$. 
    \item Choose a random $r$ where $r$ belongs to $\mathbb{Z}_n^*$.
    \item Compute the ciphertext $c$ as:
    \begin{equation}
        c=g^mr^n \bmod n^2
    \end{equation}
\end{enumerate}

 \item \textbf{Decryption:}  
 \begin{enumerate}
     \item Given a ciphertext $c$ and the private key $(\lambda, \mu)$.
     \item Compute the plaintext $m$ as:
     \begin{equation}
         m=L\left(c ^\lambda \bmod n^2\right) / L\left(g ^\lambda \bmod n^2\right) \bmod n 
          \end{equation}
Or in the same expression:
\begin{equation}
    m=L\left(c ^\lambda \bmod n^2\right) \cdot \mu \bmod n.
\end{equation}
\end{enumerate}
\end{itemize}

To sum up, the additive homomorphism property is exhibited as:
\begin{equation}
    E\left(m_1\right) \times E\left(m_2\right) \bmod n^2=E\left(m_1+m_2 \bmod n\right).
\end{equation}

\subsection{Fully homomorphic encryption scheme}
Taking advantage of Bootstrapping \cite{gentry2009fully}, a fully homomorphic encrypted system \cite{fan2012somewhat} is achieved by supporting both the addition and multiplication computation operations without any computational limitations. In the context of the Learning with Error (LWE) problem, Fully Homomorphic Encryption (FHE) emerges as an efficient defense against post-quantum attacks. In the following, we will give an overview of the FHE scheme, explain the LWE problem, and give an example of the FHE scheme - BFV.
\begin{definition}(Fully homomorphic encryption scheme)
A fully homomorphic encryption scheme consists of a set of probabilistic polynomial time algorithms defined as follows: $E$ = (KeyGen, Encrypt, Decrypt, Evaluate):
\begin{itemize}
    \item $KeyGen\left(1^\lambda\right) \rightarrow(sk, pk, evk)$: Given the security parameter $\lambda$, outputs a key pair consisting of a public encryption key $pk$, a secret decryption key $sk$, and an evaluation key $evk$.
    \item $Enc(pk, m) \rightarrow ct$: Given a message $m \in \mathcal{M}$ and an encryption key $pk$, outputs a ciphertext $ct$.
    \item $Dec(sk, ct)=m$: Given the secret decryption key and a ciphertext $ct$ encrypting $m$, outputs $m$.
    \item  $Eval\left(evk, f, ct_1, ct_2, \ldots, ct_n\right) \rightarrow ct^{\prime}$: Given the evaluation key, a description of a function $f: \mathcal{M}^n \rightarrow \mathcal{M}$, and $n$ ciphertexts encrypting messages $m_1, \ldots, m_n$, outputs the result ciphertext $ct^{\prime}$ encrypting $m^{\prime}=$ $f\left(m_1, \ldots, m_n\right)$.
\end{itemize}
\end{definition}
\subsubsection{Learning with Errors}
The Learning With Errors Assumption (LWE) is a cryptographic assumption over vectors and learning algorithms. Given a probabilistic polynomial-time (PPT) algorithm, which takes a security parameter $1^n$ and produces a vector $\vec{s} \in \mathbb{Z}_q^n$ along with a distribution $\chi$ and a sample count $m$ in poly $(n)$, the Learning With Errors Assumption comes, if:
    \begin{align}
        (\vec{s}, n, q, m, \chi) \leftarrow \operatorname{LWEGen}\left(1^n\right),\\
        A \leftarrow \mathbb{Z}_q^{n \times m}, \vec{e} \leftarrow \chi^m, \vec{b} \leftarrow \mathbb{Z}_q^m.
    \end{align}

Then
\begin{align}
    (A, \vec{s} A+\vec{e}) \stackrel{c}{\equiv}(A, \vec{b}).
\end{align}

In other words, $m$ random linear combinations of the elements of $\vec{s}$ with errors $\vec{e}$ cannot be efficiently distinguished by $m$ uniform samples (given by $\vec{b}$), even if matrix $A$ that generates the combinations is known. If we select $\vec{s}$ and $\Vec{e}$ to be one-dimensional matrices, $A$ will be a two-dimensional matrix, and $\vec{b}$ will be a one-dimensional vector.

Ring Learning with Errors (RLWE) is more properly called learning with errors over rings. It is the more extensive learning with errors problem specialized to polynomial rings over finite fields. The polynomial ring over a finite field is defined as $\mathbb{Z}_q[x] /\left\langle x^n+1\right\rangle$, where $x^n+1$ is the quotient (like the modulus in mod operation) so that no polynomial has a rank higher than $n$ in the field and $q$ also applies mod operation on the coefficients.

\subsubsection{BFV scheme}
We introduce one of the notable FHE schemes known as the Brakerski-Fan-Vercauteren (BFV) \cite{fan2012somewhat} scheme. It encrypts polynomials instead of bits and supports the addition and multiplication of vectors in a homomorphic way. As an FHE scheme, it includes the setup, key generation, encryption, and decryption process as well. 
\begin{enumerate}
    \item  $\operatorname{Step}(\lambda)$. For a security parameter $\lambda$, we set a ring size $n$, the polynomial ring $R=Z[X] /\left(X^{\wedge} n+1\right)$, a ciphertext modulus $q$, a special modulus $p$ coprime to $q$, a key distribution $\chi$, and an error distribution $\Omega$ over $R$. 
    \item Key generation. We sample $\mathbf{s} \leftarrow \chi$ and output $sk=\mathbf{s}$ while the $pk=\left([-(\mathbf{a} \cdot \mathbf{s}+\mathbf{e})]_q, \mathbf{a}\right)$, $\mathbf{a} \leftarrow R_q, \mathbf{e} \leftarrow \chi$.
    \item Encryption. To encrypt a message $m$, we sample $u, e_0, e_1$ from the error distribution $\chi$. We return the ciphertext as:
    \begin{equation}
        \mathbf{ct}=\left(\mathbf{pk}[0] \cdot u+e_0, \mathbf{pk}[1] \cdot u+e_1+\Delta \cdot m\right),
    \end{equation}
    where $\Delta=\lfloor q / t\rfloor$. To be more specific, we set $\mathbf{pk}[0] \cdot u+e_0$ as $ct_0$, and $\mathbf{pk}[1] \cdot u+e_1+\Delta \cdot m$ as $ct_1$. 
    
    \item  Decryption. The decryption computes:
    \begin{equation}
        m^{\prime} \leftarrow \mathbf{ct}[0] \cdot \mathbf{s}+\mathbf{ct}[1] \bmod q
    \end{equation}

\end{enumerate}

\section{Protocol Overview}
Based on the previous background and primitives, we will provide an overview of our protocol. First, we will clarify the security requirements and then describe the details of the Homomorphic encryption-based Dual Key Stealth Address Protocol (HE-DKSAP), together with the detailed algorithm and protocol flow. 

\subsection{Security requirements}
We provide the security definitions for HE-DKSAP, including data confidentiality, unlinkability, quantum computing attack resistance, and Chosen-plaintext attack security. In the framework of this protocol, we assume that both of the users (sender and the receiver) are semi-honest, which means they follow the protocol honestly but try to discover as much information as possible. Meanwhile, we assume that the communication channel is public and insecure, which is more practical in real scenarios.  

\begin{definition}(Data confidentiality)
    Any probabilistic polynomial time adversary party can learn any users' private information, mainly the private keys with a negligible probability only.  
\end{definition}

\begin{definition}(Unlinkablity)
        Any probabilistic polynomial time adversary party can link the generated stealth address with the real receiver's address with a negligible probability only.  
\end{definition}


\begin{definition}(Quantum computing attack resistance)
    The FHE-DKSAP scheme can prevent quantum computing attacks. 
\end{definition}

\begin{definition}(Chosen-plaintext attack (CPA) security \cite{barrera2010chosen})
Considering the following game between an adversary $\mathcal{A}$ and a challenger:\\
1. The challenger chooses a key $k \leftarrow\{0,1\}^n$. \\
2. The adversary repeatedly chooses messages $m_i$ and the challenger sends back ciphertexts $c_i=\operatorname{Enc}\left(k, m_i\right)$. The adversary can do this as many times as it wants. \\
3. The adversary chooses two challenge messages $m_0^*, m_1^*$. The challenger sends back a ciphertext $c^*=\operatorname{Enc}\left(k, m_b^*\right)$.\\
4. The adversary repeatedly chooses messages $m_i$ and the challenger sends back ciphertexts $c_i=\operatorname{Enc}\left(k, m_i\right)$. The adversary can do this as many times as it wants.\\
5. The adversary outputs guess $b^{\prime} \in\{0,1\}$.

An encryption scheme is CPA secure if for all PPT adversaries $\mathcal{A}$:
\begin{equation}
    \operatorname{Adv}_{\mathcal{A}}^{\mathrm{cpa}}(\lambda)=\left|\operatorname{Pr}\left[b^{\prime}=b\right]-\frac{1}{2}\right|
\end{equation}
is negligible in $\lambda$.
\end{definition}

\subsection{HE-DKSAP Scheme}

In this protocol, there are two participants: Alice, who acts as the sender, and Bob, who is the receiver. Both parties aim to generate and agree upon a stealth address within the HE setting, and only the receiver, Bob, can recover this stealth address. This ensures no one can trace or extract sensitive information from the receiver. To reach a higher security level, we propose a stealth address protocol with homomorphic encryption to prevent quantum computing attacks based on the Learning with Error (LWE) security assumption. While the original proposal of SA builds on the dual-key, our approach, HE-DKSAP, can help the receiver outsource the computation of checking the entire chain for stealth addresses containing assets without revealing his temporary key and prevent quantum computing attacks based on the lattice cryptographic construction that relies on far simpler mathematics than elliptic curve isogenies. Compared to the original version of DKSAP and BaseSAP \cite{wahrstatter2023basesap}, our approach provides full privacy computing for SA without the leakages of keys and personal information. We present HE-DKSAP with details below. The diagram can be found in Fig. \ref{fhe-dksap}, and the description can be found in Protocol \ref{he-dksap}.

In Algorithm \ref{HE-dksap alg}, we present the HE-DKSAP algorithm. There are two primary types of key pairs in this context. The first type is the homomorphic key pairs produced by HE algorithms like Paillier and the BFV algorithm, which we previously discussed. The second type pertains to wallet key pairs. Here, the private key is a randomly generated 64-hex-character key. Its corresponding public key is generated using the equation $PK = sk * G$, $G$ is the generator in secp256k1 elliptic curve \cite{mayer2016ecdsa}. The secp256k1 elliptic curve is defined with the equation $y^2 = x^3 + 7 (mod p)$ over the finite field $Z(2)(256)(-2)(32)(-977)$, which means the $X$ and $Y$ coordinates are 256-bit integers modulo a large number. The main functions consist of public key generation based on the secret key. From the public key, we can get the address by the $pktoaddress$ function. The process begins by hashing the public key's value with the keccak-256 \cite{vujivcic2018blockchain} hash function. We then take the last 20 bytes and prepend them with '0X' to form the stealth address. Within this algorithm, the functions $HE_{encrypt}$ and $HE_{decrypt}$ adhere to the encryption and decryption principles of the designated HE algorithm, as detailed in Section III.

\begin{algorithm}
\normalsize
 \SetAlgoLined
 \SetKwData{Left}{left}\SetKwData{This}{this}\SetKwData{Up}{up}
 \SetKwRepeat{doWhile}{do}{while}
 \SetKwFunction{Union}{Union}\SetKwFunction{FindCompress}{FindCompress}
 \SetKwInOut{Input}{Input}\SetKwInOut{Output}{Output}
 \Input{Ephemeral secret key of Alice $sk_1$, spending secret key of Bob $sk_2$, homomorphic encryption secret key $sk_b$, homomorphic encryption public key $PK_b$, $G$ is the generator of Secp256k1 elliptic curve}
 \Output{Stealth addresses $SA$}
 Key generation function: $PK_1 =sk_1 * G , PK_2 = sk_2 *G$\\
 \textbf{Bob side:}\\
Encrypt the $sk_2$: $C_2 = HE_{encrypt}(sk_2,PK_b)$\\
Send $C_2$ to Alice;\\
  \textbf{Alice side:} \\
 Encrypt the $sk_1$: $C_1 = HE_{encrypt}(sk_1,PK_b)$, ($HE_{encrypt}$ is the encryption function of the HE scheme)\\
 Conduct the additive homomorphic encryption: $C = C_1 + C_2$\\
  $PK_z = PK_1 + PK_2$ \\
$SA = pktoaddress(PK_z) =$
$ '0x' + last 20 bytes(keccak-256_{hashed}(PK_z))$\\
Send $C$ to Bob;\\
\textbf{Bob side:}\\
Decrypt the ciphertext to get secret key: $sk_z = HE_{decrypt}(C,sk_b)$,
($HE_{decrypt}$ is the decryption function of the HE scheme)\\
$PK_z =sk_z * G$ \\
Compute the stealth addresses $SA = pktoaddress(PK_z) =$
$'0x' + last 20 bytes of (keccak-256_{hashed}(PK_z))$

\caption{HE-DKSAP}
\label{HE-dksap alg}    
\end{algorithm}

\begin{figure*}[htb]
\centering
  \includegraphics[width=0.7\linewidth]{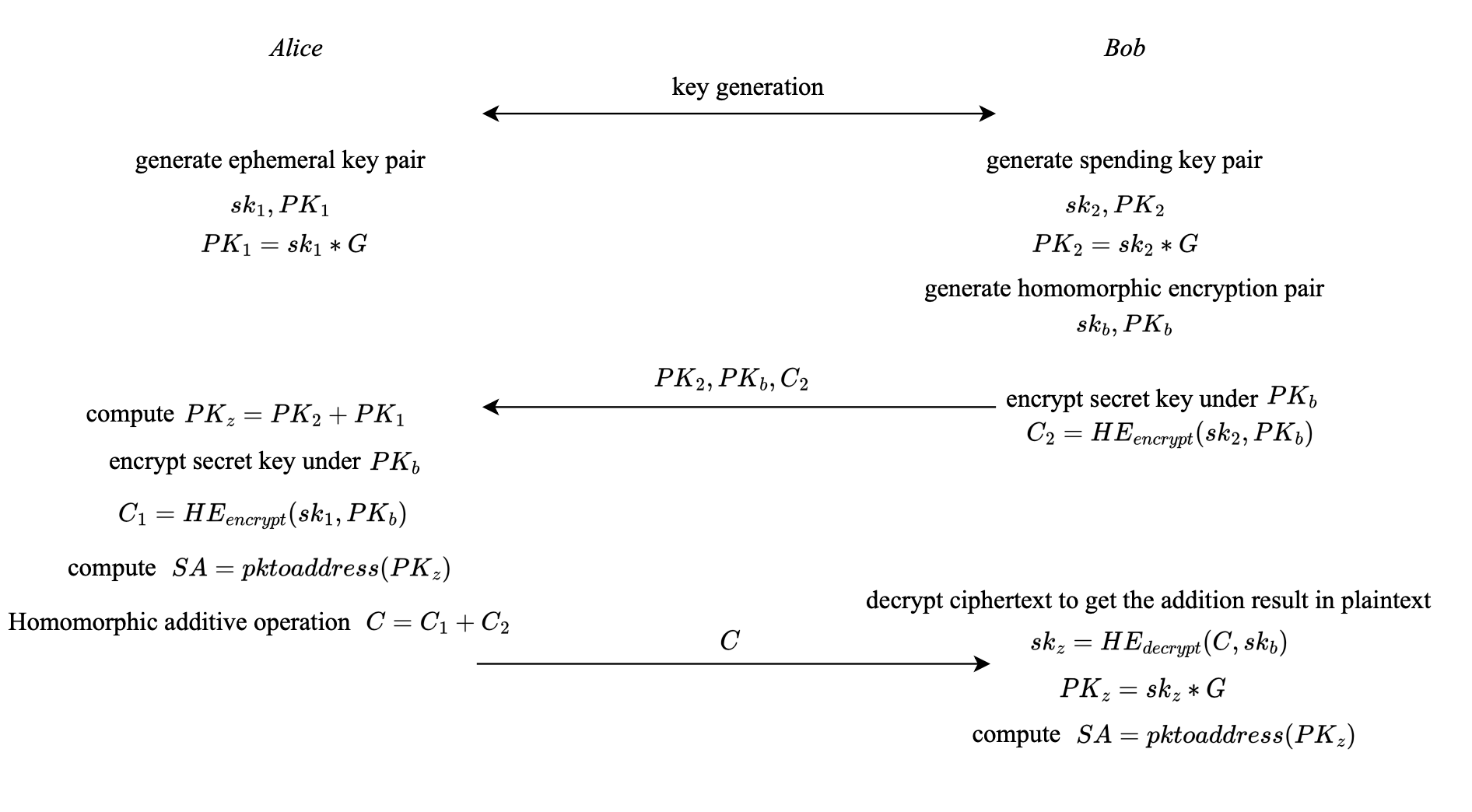}\\
  \caption{Homomorphic Encryption based Dual-key Stealth Address Protocol (HE-DKSAP)}
  \label{fhe-dksap}
\end{figure*} 

\begin{protocol}(HE-DKSAP)
\label{he-dksap}
\begin{enumerate}
    \item 
    \begin{itemize}
        \item Bob (receiver) creates two key pairs: $\left(sk_2, PK_2\right)$ and $\left(sk_b, PK_b\right)$.
        \item $sk_2$ is a randomly generated private key for $\mathrm{SA}$ spending purpose.
        \item  An SA spending wallet address public key $P K_2$ is generated using $sk_2$. It follows the standard Ethereum address conversion from $sk_2$ to $PK_2$. As said, the final wallet address by $PK_2$ does not need to be registered on Ethereum before use.
        \item $sk_b$ is the HE private key for SA encryption and decryption. It is generated by the HE key generation functions.
        \item Bob encrypts the value of $sk_2$ under $PK_b$ to get the ciphertext $C_2$.
        \item Bob publicly shares $PK_2, PK_b, C_2$.
    \end{itemize}
    \item 
    \begin{itemize}
        \item Alice (sender) generates a key pair $\left(sk_1, PK_1\right)$ randomly for each SA transaction.
        \item $sk_1$ is Ethereum ephemeral, and the public key or wallet address does not need to be registered on Ethereum before use.
        \item She combines the two public keys for Ethereum wallet generation, $PK_1$, and $PK_b$, to obtain $PK_z$.
        \item The Stealth Address (SA) is generated based on $P_z$ by following standard Ethereum address conversion. Here, we define this as the public key to address function, and the details can be found in the Algorithm \ref{HE-dksap alg}.
        \item Alice encrypts the secret key $sk_1$ using Bob's HE public key $PK_b$, resulting in the ciphertext $C_1$. Alice then conducts the additive homomorphic encryption by adding $C_1$ with $C_2$. Alice broadcasts the $C$ to Bob. 
        \item Alice can not know SA's private key, as nobody can guess the private key from the public key $PK_z$. It means Alice only knows where to send SA transactions but will never be able to log in to this SA wallet.
    \end{itemize}
    \item 
    \begin{itemize}
        \item  Bob can decrypt the ciphertext $C$ with his HE private key $sk_b$. The HE decryption result is the private key $sk_z$ to the wallet that receives the sent from Alice. 
        \item Then, he can generate the stealth address with $sk_z$ and decrypt it with the private key, which only Bob owns. So Bob can transfer its balance with the private key $sk_z$ for the SA wallet.

    \end{itemize}

\end{enumerate} 
\end{protocol}

\section{Implementation}
We have implemented our protocol in both HE and FHE scheme. We evaluated our implementations by comparing to DKSAP plain scheme. Both HE-DKSAP and FHE-DKSAP are variants of the DKSAP plain scheme. We can conclude the stealth addresses generation into three implementation processes: public and secret key generation, elliptic curve Diffie-Hellman (ECDH) algorithm, and stealth private key derivation as follows: 

\begin{enumerate}
    \item Public and Secret Key Generation: We use the pseudorandom generator to generate the private keys in 256 bits. The public keys are generated based on the spec256k1 curve according to the private key to public key convert function. The public key with the Keccak-256 hash scheme then hashes the stealth address. 
    \item ECDH Algorithm: The Elliptic Curve Diffie-Hellman (ECDH) algorithm is pivotal in deriving shared secrets between parties. This cryptographic protocol targets the key exchange and ensures that secure communication channels are established while minimizing the risk of eavesdropping.
    \item Stealth Private Key Derivation: Building upon the generated keys and ECDH protocol, only the recipient can derive the private key of the stealth address, which preserver the privacy for this scheme.
\end{enumerate}
In this section, we will delve into the details of the HE-DKSAP with the Paillier scheme and the FHE-DKSAP with the BFV scheme.

\subsection{HE-DKSAP with Paillier}
Introducing HE using the Paillier scheme, our system extends the possibilities of secure computation without compromising data privacy. Integrating Paillier encryption requires a specific library, paillier, for seamless implementation. 
\begin{itemize}
    \item Paillier HE Key Generation: Bob's HE key is generated by key generation scheme. Based on the Section III primitive part, the public key pairs of Bob is as follows: 
    \begin{equation}
    \begin{aligned}
        PK_b = (n, g)\\
        sk_b = (\lambda, \mu)
    \end{aligned}
    \end{equation}
    \item Paillier Encryption and Homomorphic Addition: Choose a random $r$ where $r$ belongs to $\mathbb{Z}_n^*$. Taking the private key of Alice $sk_1$ and spending key of Bob $sk_2$, the ciphertext output as follows:
    \begin{equation}
    \begin{aligned}
        C_1 = g^{sk_1}r^n \bmod n^2 \\
        C_2 = g^{sk_2}r^n \bmod n^2
    \end{aligned}
    \end{equation}
    The addition of $C_1$ and $C_2$ meets the linear addition homomorphism property. 
    \begin{equation}
        C = C_1 + C_2 = g^{sk_1 + sk_2} r^n \bmod n^2 
    \end{equation}
    \item Paillier Decryption: Given a ciphertext $C$ and the private key $(\lambda, \mu)$, we can decrypt the ciphertext and get the result of $sk_z$ for stealth address generation. 
    \begin{equation}
         sk_z = sk_1 + sk_2=L\left(C ^\lambda \bmod n^2\right) \cdot \mu \bmod n
    \end{equation}
\end{itemize}

\subsection{FHE-DKSAP with BFV}
BFV is rooted in the LWE problem. In the FHE-DKSAP scheme, bootstrapping is not required. This is because the homomorphic addition is performed only once, and there is no risk of incorrect decryption due to excessive noise.

\begin{itemize}
    \item BFV Key Generation: Bob's HE key is generated by key generation scheme. Based on the Section III primitive part, we sample $\mathbf{sk_b} \leftarrow \chi$ and the public key of Bob is generated based on the LWE problem and shows as follows: 
    \begin{equation}
    \begin{aligned}
        PK_b = \left([-(\mathbf{a} \cdot \mathbf{sk_b}+\mathbf{e})]_q, \mathbf{a}\right), \mathbf{a} \leftarrow R_q, \mathbf{e} \leftarrow \chi
    \end{aligned}
    \end{equation}

\item BFV Encryption and Homomorphic Addition: Taking the private key of Alice $sk_1$ and spending key of Bob $sk_2$, the ciphertext output as follows:
    \begin{equation}
    \begin{aligned}
        \mathbf{C_1}=\left(\mathbf{pk}[0] \cdot u_1+e_1, \mathbf{pk}[1] \cdot u_1+e_1+\Delta \cdot sk_1\right) \\
        \mathbf{C_2}=\left(\mathbf{pk}[0] \cdot u_2+e_2, \mathbf{pk}[1] \cdot u_2+e_2+\Delta \cdot sk_2\right) 
    \end{aligned}
    \end{equation}
    The addition of $\mathbf{C_1}$ and $\mathbf{C_2}$ meets the addition homomorphism property and results in $\mathbf{C}$. 
    \begin{equation}
      \begin{aligned} 
      \mathbf{C} = \mathbf{C_1} + \mathbf{C_2} = \mathbf{C[0]} + \mathbf{C[1]} \\
    \mathbf{C}[0] = [\mathrm{pk}_0 \cdot(u_1+u_2)+(e_1+e_2)+\Delta(sk_1+sk_2)]_q.\\
    \mathbf{C}[1] =[\mathrm{pk}_1 \cdot(u_1+u_2)+(e_1+e_2)]_q)
    \end{aligned}
    \end{equation}
    \item BFV Decryption: Given a ciphertext $\mathbf{C}$ and the private key $(\mathbf{sk_b}$, we can decrypt the ciphertext and get the result of $sk_z$ for stealth address generation. 
    \begin{equation}
        sk_z^{\prime} \leftarrow \mathbf{C}[0] \cdot \mathbf{sk_b}+\mathbf{C}[1] \bmod q
    \end{equation}

\end{itemize}

\section{Evaluation}
In this section, we will prove the correctness of the FHE-DKSAP scheme and verify the security properties of HE-DKSAP in terms of data confidentiality, unlinkability, quantum computing attack resistance, and CPA security. 

\subsection{Correctness}
Correctness evaluation is essential for FHE-DKSAP due to introducing the learning with error assumption. We will demonstrate the correctness of FHE-DKSAP using the BFV scheme as an illustrative example, given that different encryption algorithms produce distinct ciphertexts. However, our FHE-DKSAP is capable of accurately computing any FHE schemes, including BFV, BGV \cite{brakerski2014leveled}, and the approximate FHE algorithm CKKS \cite{cheon2017homomorphic}.

\begin{proof}(Correctness)
    According to the encryption and decryption function of BFV scheme, for $\mathbf{sk_b} \cdot \mathbf{C}[0]+\mathbf{C}[1]=-e u+e_0 s+e_1+\Delta \cdot (sk_1 + sk_2)$. Let $e^{\prime}=-e u+e_0 s+e_1, \varepsilon=\frac{q}{t}-\Delta<1$, when $\left\|e^{\prime}\right\|_{\infty}<\frac{q}{2 t}-t$, we have:
$$
\left\lfloor\frac{t \cdot (sk_1+sk_2)^{\prime}}{q}\right\rceil=\left\lfloor\frac{t}{q}\left(\left(\frac{q}{t}-\varepsilon\right) (sk_1+sk_2)+e^{\prime}\right)\right\rceil  
$$
$$
\approx\left\lfloor (sk_1+sk_2)+\frac{t}{q}\left(e^{\prime}-(sk_1+sk_2)\right)\right\rceil=(sk_1+sk_2)
$$
It completes the correctness proof. 
\end{proof}

\subsection{Security analysis}

We have defined the security requirements of HE-DKSAP in Section IV. This section will prove that HE-DKSAP meets the requirements of confidentiality, unlinkability, post-quantum attack resistance, and CPA security. In HE-DKSAP, confidential data encompasses the secret keys of both Alice and Bob. Furthermore, the generated stealth address must remain unlinkable to the recipient's original address. This scheme relies on correctly executing encryption and decryption functions based on FHE, leveraging the inherent strengths of FHE to guard against quantum computing attacks. CPA security means the attacker can not distinguish the plaintext from the ciphertext. In this section, we will furnish formal proof for the security definitions associated with these requirements.

\subsubsection{Data confidentiality}
The security of a public key system lies in the fundamental principle that, given a public key, one cannot feasibly deduce the corresponding secret key. The public key is generated from the secret key based on secp256k1. The security of secp256k1 lies in the elliptic curve discrete logarithm problem (ECDLP). 
\begin{theorem}(Elliptic Curve Discrete Logarithm Problem (ECDLP))
    Let $E$ be an elliptic curve over a finite field $K$. Suppose there are points $P, Q \in E(K)$ given such that $Q \in\langle P\rangle$. Determine $k$ such that $Q=[k] P$.
\end{theorem}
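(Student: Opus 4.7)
The statement is literally the elliptic curve discrete logarithm problem phrased as a computational task, so the natural proposal is a constructive algorithm that, given the hypotheses $P,Q \in E(K)$ with $Q \in \langle P \rangle$, outputs the integer $k$ with $Q=[k]P$. The plan is to exhibit a generic procedure that uses only the group law on $E(K)$, with no reliance on curve-specific structure, and then to be honest about where the procedure stops being feasible --- so that the invocation of ECDLP as a ``hard problem'' in the subsequent security argument is meaningful.

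My main construction is Shanks' baby-step giant-step. First I would determine $n = |\langle P \rangle|$ in polynomial time by running Schoof's algorithm on $E$ to obtain $|E(K)|$, factoring it, and selecting the divisor $n$ for which $[n]P = \mathcal{O}$ while no proper divisor annihilates $P$. Then set $m = \lceil \sqrt{n}\,\rceil$, precompute the ``baby steps'' $\{[j]P : 0 \leq j < m\}$ into a hash table $T$ keyed by affine coordinates, and iterate the ``giant steps'' $Q_i := Q - [im]P$ for $i = 0, 1, \ldots, m-1$, updating $Q_i$ incrementally with a single subtraction of $[m]P$ per step. On the first collision $Q_i = [j]P$ in $T$, return $k \equiv im + j \pmod{n}$.

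Correctness is the short step: since $Q \in \langle P \rangle$ there is a unique $k \in \{0,1,\ldots,n-1\}$ with $Q = [k]P$; writing $k = im + j$ with $0 \leq i,j < m$ is always possible because $m^{2} \geq n$, and then the $i$-th giant step produces exactly the corresponding baby-step entry of $T$. The cost is $O(\sqrt{n})$ group operations and $O(\sqrt{n})$ storage. Two variants strengthen the guarantee: Pollard's rho matches the time bound with essentially constant space, and Pohlig--Hellman reduces the whole problem to instances over the prime-order subgroups of $\langle P \rangle$, yielding effective cost $O(\sqrt{\ell}\,)$ where $\ell$ is the largest prime divisor of $n$.

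The hard part --- and the reason this statement functions as a cryptographic assumption rather than a computed answer --- is that no generic algorithm is known to beat $O(\sqrt{n})$. For the secp256k1 curve used in HE-DKSAP the order $n$ is a 256-bit prime, so the above procedure still requires roughly $2^{128}$ group operations. Polynomial or subexponential attacks are known only for degenerate families: the MOV/Frey--R\"uck embedding into a finite-field multiplicative group on supersingular or low-embedding-degree curves, and the Smart--Satoh--Araki $p$-adic lift on anomalous curves (those with $|E(K)| = |K|$). Secp256k1 is specifically constructed to avoid both pathologies, so the algorithm above is essentially the best one can offer: it correctly determines $k$ in principle, but in the cryptographically relevant regime the output cannot be exhibited within feasible resources, which is exactly the property invoked by the data-confidentiality argument that follows.
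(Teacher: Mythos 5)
Your reading of the statement is the right one: as written it is a computational problem rather than a proposition, so the only meaningful things to establish are (i) that the task is solvable in principle and (ii) that no feasible algorithm is known, and you do both correctly. The baby-step giant-step construction with $n=\vert\langle P\rangle\vert$ obtained from Schoof's algorithm plus factoring, the decomposition $k=im+j$ with $m=\lceil\sqrt{n}\,\rceil$, the $O(\sqrt{n})$ time and space bounds, and the Pollard rho / Pohlig--Hellman refinements are all standard and accurate, as are your caveats about the MOV/Frey--R\"uck embedding and the anomalous-curve attack, neither of which applies to secp256k1. The paper takes a genuinely different and much thinner route: the proof it attaches is really the proof of the data-confidentiality claim, and it consists of the single assertion that ECDLP ``is regarded as an NP-hard problem,'' backed only by citations to survey articles. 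That assertion is incorrect as stated --- ECDLP is not known to be NP-hard, and the cited surveys do not claim it is; the defensible statement is precisely the one you give, namely that no algorithm is known to beat $O(\sqrt{n})$ group operations on a well-chosen curve (with generic-group lower bounds showing this is optimal for black-box methods). So your approach differs in decomposition --- a constructive upper bound followed by an explicit infeasibility argument in the 256-bit regime --- and it buys the correct formulation of the hardness on which the paper's data-confidentiality argument actually rests: recovering $sk$ from $PK = sk\cdot G$ costs on the order of $2^{128}$ group operations, which is a concrete-security claim, not a complexity-class one.
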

\begin{proof}(Data confidentiality)
This ECDLP is regarded as an NP-hard problem, and many articles have proved this \cite{lauter2008elliptic, galbraith2016recent}. In our scheme, ECDLP preserves the data confidentiality of the secret keys. This completes the proof. 
\end{proof}

\subsubsection{Unlinkability}
Stealth addresses are typically one-time-use addresses. Whenever someone sends funds to a recipient, they compute and send it to a new stealth address. In our protocol, the SA is produced using large prime numbers, which are, in turn, generated by pseudorandom generator functions with distinct seeds, ensuring their uniqueness in each instance. This ensures that there is no common address on the blockchain to link multiple transactions to the same recipient. Since a new stealth address is created for every transaction, and it is not linked directly to the recipient's main public key on the blockchain, it becomes tough for an outsider to determine which transactions belong to a specific individual.

\subsubsection{Quantum computing attacks resistance}
Only FHE-DKSAP is quantum computing attack resistant due to the lattice-based cryptography \cite{barker2020getting}. The robustness of these cryptographic algorithms hinges on the computational difficulty of lattice problems, notably the Shortest Vector Problem (SVP) and the LWE problem. Crucially, this problem presents formidable challenges to both classical and quantum computing platforms.

\subsubsection{CPA security} Providing that the underlying homomorphic encryption scheme is CPA-secure, no attacker can find any bit of information on the data of the parties. In this section, we will prove that both HE-DKSAP and FHE-DKSAP are CPA-secure. 
\begin{itemize}
\item CPA security of HE-DKSAP (Paillier). Paillier is CPA-secure under the Decisional Composite Residuosity Assumption(DCRA) \cite{guo2020generalization}, which defines as follows:
\begin{definition}(Decisional Composite Residuosity Assumption(DCRA))
Given a composite $n$ and an integer $z$, it is hard to decide whether $z$ is an $n$-residue modulo $n^2$, i.e., whether there exists a $y$ such that
$$
z \equiv y^n \quad\left(\bmod n^2\right).
$$  
\end{definition}

\begin{proof}(CPA security of HE-DKSAP-Paillier) 
    Let $\mathcal{A}$ be a PPT CPA adversary for the Paillier's scheme. $D$ takes the role of the challenger of IND-CPA game for our scheme. $\mathcal{A}$ receives two ciphertext $C_1 = g^{sk_1}r^n \bmod n^2$ and $C_2 = g^{sk_2}r^n \bmod n^2$. Based on the DCRA, the adversary cannot distinguish $sk_1$ and $sk_2$ from $C_1$ and $C_2$. This completes our proof.
\end{proof}

    \item CPA security of FHE-DKSAP: FHE-DKSAP is based on the LEW assumption $\operatorname{LWE}\left(n_{lwe}, s, q\right)$, where $n_{lwe}, s, q$ is the parameter for security, consider matrix $A \stackrel{\$}{\leftarrow} \mathbb{Z}_q^{m \times n_{lwe}}$, vectors $r \stackrel{\$}{\leftarrow} \mathbb{Z}_q^{m \times 1}, x \stackrel{\mathrm{g}}{\leftarrow} \mathbb{Z}_{(0 s)}^{n_{lwe} \times 1}, e \stackrel{\mathrm{g}}{\leftarrow} \mathbb{Z}_{(0, s)}^{m \times 1}$. The advantage of a PPT adversary $\mathcal{A}$ in the challenge algorithm $\mathbf{D}$ can be converted as follows:
    \begin{equation}
    \begin{aligned}
    \mathbf{Adv}_{\mathcal{D}}^{\mathrm{LWE}\left(n_{l w e}, s, q\right)}(\lambda)= \\
    |\operatorname{Pr}[\mathcal{D}(A, A x+e) \rightarrow 1]-\operatorname{Pr}[\mathcal{D}(A, r) \rightarrow 1]|.
    \end{aligned}
    \end{equation}
\begin{proof}(CPA security of FHE-DKSAP) 
We start by providing adversary $\mathcal{A}$ with a random vector $P =pR-AS \stackrel{\$}{\leftarrow} \mathbb{Z}_q^{n_{lwe} \times l}$. To prove this $P$ is indistinguishable to adversary $\mathcal{A}$, we need to reduce $P$ into the LWE form, where 
\begin{equation}
    p^{-1} P=R+\left(-p^{-1} A\right) S \in \mathbb{Z}_q^{n_{l w e} \times l}.
\end{equation}
As $A$ is random, $A^{\prime}=-p^{-1} A \in \mathbb{Z}_q^{n_{l w e} \times n_{lwe}}$ is also random. Therefore, $P^{\prime}=p^{-1} P \in \mathbb{Z}_q^{n_{l w e} \times l}$ is random under the LWE assumption which in turn means $P$ is random as claimed.
    
\end{proof}
\end{itemize}
\section{Experimental validation}

\begin{table*}[]
\centering
\begin{tabular}{cl|l|l}
\multicolumn{1}{l}{}                                                                               & Information on chain   & Bits & Example                                                                                                                                                                                              \\ \hline
\multicolumn{1}{c|}{\multirow{3}{*}{DKSAP}}                                                        & PK\_scan               & 160  & \multirow{3}{*}{\begin{tabular}[c]{@{}l@{}}(0x86b1aa5120f079594348c67647679e7ac4c365b2c01330db782b0ba611c1d677, \\ 0x5f4376a23eed633657a90f385ba21068ed7e29859a7fab09e953cc5b3e89beba)\end{tabular}} \\ \cline{2-3}
\multicolumn{1}{c|}{}                                                                              & PK\_spend              & 160  &                                                                                                                                                                                                      \\ \cline{2-3}
\multicolumn{1}{c|}{}                                                                              & R\_a (public of Alice) & 160  &                                                                                                                                                                                                      \\ \hline
\multicolumn{1}{c|}{\multirow{3}{*}{\begin{tabular}[c]{@{}c@{}}FHE-DKSAP\\ Paillier\end{tabular}}} & PK\_bob                & 160  & \begin{tabular}[c]{@{}l@{}}(0x86b1aa5120f079594348c67647679e7ac4c365b2c01330db782b0ba611c1d677, \\ 0x5f4376a23eed633657a90f385ba21068ed7e29859a7fab09e953cc5b3e89beba)\end{tabular}                  \\ \cline{2-4} 
\multicolumn{1}{c|}{}                                                                              & PK\_fhe\_bob           & 128  & 192ace432e                                                                                                                                                                                           \\ \cline{2-4} 
\multicolumn{1}{c|}{}                                                                              & $C_2, C$                     & 48   & 0x7faf7cf217c0                                                                                                                                                                                       \\ \hline
\multicolumn{1}{c|}{\multirow{3}{*}{\begin{tabular}[c]{@{}c@{}}FHE-DKSAP\\ BFV\end{tabular}}}      & PK\_bob                & 160  & \begin{tabular}[c]{@{}l@{}}(0x86b1aa5120f079594348c67647679e7ac4c365b2c01330db782b0ba611c1d677, \\ 0x5f4376a23eed633657a90f385ba21068ed7e29859a7fab09e953cc5b3e89beba)\end{tabular}                  \\ \cline{2-4} 
\multicolumn{1}{c|}{}                                                                              & PK\_fhe\_bob           & 128  & hi+SltsSwYnILVvNl5mFp+jbKJnlxwg7r7g1DGr8QQs=                                                                                                                                                         \\ \cline{2-4} 
\multicolumn{1}{c|}{}                                                                              & $C_2, C$                     & 256  & Na*                                                                                                                                                                                                  \\ \hline
\end{tabular}
\caption{Information storage on Chain}
\label{tab:storage}
\end{table*}

To thoroughly assess the performance and effectiveness of our system, we conducted comprehensive test of the generated stealth addresses in terms of the SAP computation time and storage of the generated stealth addresses. Specifically, we evaluated the stealth address generation process using three different setups: DKSAP (Plain), HE-DKSAP (Pallier), and FHE-DKSAP (BFV). 

The following settings are used in the experiments. Processor: Linux, 2.3 GHz Quad-Core Intel Core i5; memory: 8 GB 2133 MHz LPDDR3; python Version 3.9 with the libraries of Python-Paillier 1.2.2 and Concrete, zamafhe/concrete-python:v2.0.0.

\subsection{Computation time}

The computation time benchmark can be found in Table \ref{tab:computation time}. We run each of the algorithms 100 times. We can observe that DKSAP excels in computational speed due to its lack of privacy-preserving encryption. HE-DKSAP-Paillier balances enhanced data privacy with longer computational time due to the intricate encryption and decryption of the Paillier scheme, which is around 20 times slower than the plain scheme. FHE-DKSAP-BFV is slightly slower than unencrypted DKSAP but notably faster than HE-DKSAP-Paillier. This efficiency can be attributed to its implementation in the RUST programming language, highlighting the importance of suitable tools for execution.
\begin{table}[htb]
    \centering
\begin{tabular}{lllll} 
& DKSAP & HE-DKSAP & FHE-DKSAP \\
& Plain & Pallier & BFV \\
\hline Average (s) & 0.01938 & 0.44560 & 0.0355 \\
\hline Max (s) & 0.02218 & 0.9829 & 0.05095 \\
\hline Min (s) & 0.01751 & 0.10880 & 0.02870 \\
\hline
\end{tabular}
    \caption{Computation time benchmark}
    \label{tab:computation time}
\end{table}

\subsection{On chain storage}

The on-chain storage of DKSAP-Plain, HE-DKSAP-Paillier, and FHE-DKSAP-BFV can be found in Table \ref{tab:storage}. Several notable results can be gleaned from the provided tables. Notably, a comparison between HE-DKSAP-Paillier and FHE-DKSAP-BFV against DK-SAP-Plain reveals that HE-DKSAP-Paillier and FHE-DKSAP-BFV require less storage space. This variance in storage consumption can be attributed to the use of distinct encryption and decryption algorithms. It is worth noting that the storage for $C1$ in FHE-DKSAP-BFV is a two-part composed array. 

\section{Conclusion}
Inspired by the concepts of DKSAP and BaseSAP, we introduce the HE-DKSAP protocol. This can be further extended to FHE-DKSAP, designed to prevent associating a blockchain transaction's output with a recipient's public address. Through our security analysis, we demonstrate that this protocol upholds the security properties of data confidentiality and unlinkability, and is resistant to quantum computing and CPA security. Performance test results reveal that it requires the minimum on-chain storage, although HE-DKSAP introduces a modest increase in computation time.

However, as we look towards our next phase of research, an interesting avenue to explore is a comparative analysis among various FHE schemes to understand the influence and implications of this protocol more deeply. Another pertinent question is the effective management of key pairs within the wallet. Proper key management is pivotal not only for operational efficiency but also for ensuring the security and integrity of transactions. Evaluating different strategies for key storage, rotation and retrieval can enhance the robustness of our protocol and provide valuable insights into its practical applications.

\printcredits

\bibliographystyle{elsarticle-num}

\bibliography{cas-refs}


\end{document}